\newtheorem{teo}{Theorem}
 \definecolor{BLACK}{gray}{0}
 \definecolor{WHITE}{gray}{1}
 \definecolor{RED}{rgb}{1,0,0}
 \definecolor{GREEN}{rgb}{0,1,0}
 \definecolor{BLUE}{rgb}{0,0,1}
 \definecolor{CYAN}{cmyk}{1,0,0,0}
 \definecolor{MAGENTA}{cmyk}{0,1,0,0}
 \definecolor{YELLOW}{cmyk}{0,0,1,0}
\begin{document}

\title{Entanglement monotones connect distinguishability and predictability}

\author{Marcos L. W. Basso}
\email{marcoslwbasso@hotmail.com}
\address{Departamento de F\'isica, Centro de Ci\^encias Naturais e Exatas, Universidade Federal de Santa Maria, Avenida Roraima 1000, Santa Maria, Rio Grande do Sul, 97105-900, Brazil}
\address{New adress: Centro de Ci\^encias Naturais e Humanas, Universidade Federal do ABC, Avenida dos Estados 5001, 09210-580 Santo Andr\'e, S\~ao Paulo, Brazil}
\author{Jonas Maziero}
\email{jonas.maziero@ufsm.br}
\address{Departamento de F\'isica, Centro de Ci\^encias Naturais e Exatas, Universidade Federal de Santa Maria, Avenida Roraima 1000, Santa Maria, Rio Grande do Sul, 97105-900, Brazil}

\selectlanguage{english}%

\begin{abstract} 
Distinguishability and predictability appear in different complementarity relations. Englert and Bergou pointed out the possible connection among distinguishability, predictability, and entanglement. They conjectured that an entanglement measure was hidden between the measures of distinguishability and predictability. Qureshi connected these quantities for a particular trio of measures. In this letter, we define a new entropic distinguishability measure and suggest an entanglement measure as the difference between it and an entropic predictability measure from the literature. An entanglement monotone is defined from the largest value of the distinguishability and the corresponding predictability, provided that the predictability satisfies the criteria already established in the literature. Our results formally connect an entanglement monotone with distinguishability and the corresponding predictability, without appealing to specific measures.

\end{abstract}

\keywords{Entanglement monotone; Distinguishability; Predictability;}

\maketitle


Bohr's complementarity principle was introduced as a qualitative statement about single quantum systems, or quantons \cite{Leblond}, which possess properties that are equally real but mutually exclusive \cite{Bohr}. The first quantitative version of the wave-particle duality was explored by Wootters and Zurek \cite{Zurek}. By considering unbalanced two-beam interferometers, in which the intensities of the beams were not
necessarily the same, Greenberger and Yasin defined an \textit{a priori} predictability $P$, for which the particle aspect is inferred once the quanton is more likely to follow one path than the other. This predictability limits the amount of visibility $V$ of the interference pattern according to the complementarity relation \cite{Yasin}
\begin{equation}
    P^2 + V^2 \le 1. \label{eq:cr1}
\end{equation}
By examining Eq. (\ref{eq:cr1}), one sees that even though an experiment can provide partial information about the wave and particle natures of a quantum system,
the more information it gives about one aspect of the system, the less information the experiment can provide about the other. Thus wave-particle duality becomes a quantitative statement. 

Following the line of reasoning of Ref. \cite{Zurek}, Englert \cite{Engle} considered a balanced two-beam interferometer in which one obtains the which-way information by introducing a path-detecting device, thus making the two paths distinguishable. The author derived the following inequality for such situations
\begin{align}
    D^2 + V^2 \le 1, \label{eq:cr2}
\end{align}
where $D$ is called distinguishability. In addition, Englert and Bergou \cite{Bethold} pointed out the possible connection among distinguishability, predictability, and entanglement, and conjectured that an entanglement measure was hidden  between the measures of distinguishability and predictability. Until now, many lines of thought were taken for quantifying the wave-particle properties of a quantum system \cite{Angelo, Bera, Coles, Hillery, Qureshi, Maziero, Lu}. Besides, D\"urr \cite{Durr} and Englert \textit{et al.} \cite{Englert} devised reasonable criteria for checking the reliability of newly defined predictability and visibility measures. As well, recently it was realized that the quantum coherence \cite{Baumgratz} would be a good generalization for the visibility of an interference pattern \cite{Bera, Bagan,  Mishra}. 

However, complementarity relations like the one in Eq. (\ref{eq:cr1}) are saturated only for pure, single-quanton, quantum states. And duality relations like Eq. (\ref{eq:cr2}) are saturated only if the joint state of quanton and path-detector is pure. So, if the quantum state of the quanton is mixed, 
it must be entangled with a path detector and/or with the environment. As stressed by Qian \textit{et al.} \cite{Qian}, complementarity inequalities like Eq. (\ref{eq:cr1}) do not really capture a balanced exchange between $P$ and $V$ because the inequality permits, for instance, that $P$ and $V$ decrease together due to the interaction of the system with its environment, allowing even the extreme case $P = V = 0$. Hence, something must be missing from Eq. (\ref{eq:cr1}). As noticed by Jakob and Bergou \cite{Janos}, this lack of local information about the system is due to entanglement, and a triality relation was derived:
\begin{align}
    P^2 + V^2 + \mathcal{C}^2 = 1, \label{eq:ccr1}
\end{align}
where $\mathcal{C}$ is the concurrence measure of entanglement \cite{Wootters}, which is recognized as the appropriate quantum correlation measure in a bipartite state of two qubits that completes the relation (\ref{eq:cr1}). Triality relations like Eq. (\ref{eq:ccr1}) are also known as complete complementarity relations \cite{Marcos}, since in Ref. \cite{Eberly} the authors interpreted this equality as completing the duality relation given by Eq. (\ref{eq:cr1}), thus turning the inequality into an equality.

Since Eq. (\ref{eq:cr2}) saturates for pure joint states of the quanton and path-detector, the authors in Ref. \cite{Jakob} realized that $D^2 = P^2 + \mathcal{C}^2$, thus confirming Englert and Bergou's conjecture for a specific trio of measures of distinguishability, predictability, and entanglement. This conjecture was recently discussed and confirmed for a different trio of measures in Ref. \cite{Tabish}.
As pointed out by the author, if the quanton couples with the path-detecting device, then of course entanglement is useful in relating distinguishability to predictability. However, if the quanton does not couple with the path-detecting apparatus, the concept of distinguishability will be useless, as one is not experimentally distinguishing between the paths. On the other hand, entanglement can be considered as an integral part of a triality relation since, for instance, it is possible that the quanton's path is entangled with an internal degree of freedom or with an inaccessible degree of freedom of the environment. Besides, it is worth emphasizing that these authors showed the connection between entanglement, distinguishability, and predictability for specific measures, which is not a general result.

In addition, it is worth pointing out that the distinguishability $D$ is referred by the authors in Ref. \cite{Bethold} as the which-alternative knowledge, which reflects the knowledge that the experimenter can learn from the path-detector device (or from measuring an observable of the environment). Rather, it is the largest value of $D$ which is called by the authors as the \textit{distinguishability} $\mathcal{D}$, and represents Nature's information about the actual alternative, i.e., the case where the quanton couples to a path detector device such that the states of the detectors become completely orthogonal to each other, representing an ideal von-Neumann measurement \cite{John}. Besides, the predictability $P$ reflects the experimentalist knowledge before measuring the path-detector device. Thus, the authors noticed the following hierarchy: $P \le D \le \mathcal{D}$. 

In this letter, 
we define a new entropic distinguishability measure and suggest an entanglement measure as the difference between this entropic distinguishability and an entropic predictability measure already defined in the literature \cite{Maziero}. Second, we prove that it is possible to define an entanglement monotone from the largest value of $D$ and the corresponding predictability, provided that the predictability satisfies the criteria already established in Refs. \cite{Durr, Englert}. Thus, this result formally connects an entanglement monotone with the largest value of $D$ and the corresponding predictability, without appealing to specific measures. 



Before stating the main result of this letter, let us first establish a new measure of distinguishability based on the von-Neumann entropy.  In Ref. \cite{Marcos}, we considered a quantum system described by the density operator $\rho_A$ of dimension $d_A$. The relative entropy of coherence of this state is defined as \cite{Baumgratz}
\begin{align}
    C_{re}(\rho_A) = \min_{\iota \in I} S_{vn}(\rho_A||\iota),
\end{align}
where $I$ is the set of all incoherent states, $S_{vn}(\rho_A||\iota) = \Tr(\rho_A \log_2 \rho_A - \rho_A \log_2 \iota)$ is the relative entropy, and $S_{vn}(\rho)$ denotes the von Neumann entropy of $\rho$. The minimization procedure leads to $\iota = \rho_{Adiag} = \sum_{i = 1}^{d_A} \rho^A_{ii} \ketbra{i}$. Thus 
\begin{align}
    C_{re}(\rho_A) = S_{vn}(\rho_{Adiag}) - S_{vn}(\rho_A) \label{eq:cre}.
\end{align}
Since $C_{re}(\rho_A) \le S_{vn}(\rho_{Adiag})$, it is possible to obtain an incomplete complementarity relation from this inequality:
\begin{equation}
    C_{re}(\rho_A) + P_{vn}(\rho_A) \le \log_2 d_A \label{eq:cr6},
\end{equation}
with $P_{vn}(\rho_A) := S^{\max}_{vn}- S_{vn}(\rho_{Adiag}) = \log_2 d_A + \sum_{i = 0}^{d_A - 1} \rho^A_{ii} \log_2 \rho^A_{ii}$ being a good measure of predictability, already defined in Ref. \cite{Maziero}, while $C_{re}(\rho_A)$ is a bone-fide measure of visibility. It is noteworthy that $P_{vn}(\rho_A)$ is defined as the difference between the maximum entropy and the \textit{a priori} entropy of the probability distribution defined by the diagonal elements of $\rho_A$.

Without loss of generality, in the context of $d$-slit interferometry, let $\ket{j}_A$ describe the state corresponding to the quanton taking the $j$-th path, then a general pure state is given by $\ket{\psi}_A = \sum_j a_j \ket{j}_A$, where $a_j$ represents the probability amplitude for the quanton to take the $j$-th path, and $\{\ket{j}_A\}_{j = 0}^{d_A - 1}$ can be regarded as an orthonormal path basis. Let us consider a path-detector which is capable of recording which path the quanton followed. In order to obtain the path-information from the detector, a measurement of a suitable observable $\mathcal{O}_B$ of the detector must be performed. Let $\{o_j\}_{j = 0}^{d_A - 1}$ be the eigenvalues of $\mathcal{O}_B$ with corresponding eigenstates $\{\ket{d_j}_B\}_{j = 0}^{d_A - 1}$, which are normalized but not necessarily orthogonal. The basic requirement for a quantum measurement, according to von Neumann \cite{John}, is to let the detector interact with a quanton and get entangled with it, i.e., $U(\ket{j}_A \otimes \ket{d_0}_B) \to \ket{j}_A \otimes \ket{d_j}_B$, where $\ket{d_0}$ is the initial detector state and $U$ represents the unitary evolution operator. Then, the state of the quanton and the detector is given by
\begin{equation}
    \ket{\Psi}_{A,B} = \sum_j a_j \ket{j}_A \otimes \ket{d_j}_B. \label{eq:psiab}
\end{equation}
When a measurement is performed, the outcome $o_k$ is obtained with probability $p_k = \Tr(I_A \otimes \Pi^k_B \rho_{AB}) \equiv \Tr(\Pi^k_B \rho_{AB})$, where  $\Pi^k_B = \ket{d_k}_B\bra{d_k}$ and $\rho_{AB} = \ket{\Psi}_{A,B}\bra{\Psi}$. The reduced density operator of the quanton, which is conditioned on the outcome $o_k$ of the observable $\mathcal{O}_B$, is given by $\rho^{(k)}_A = p^{-1}_k \Tr_B(\Pi^k_B \rho_{AB})$. Thus, after the non-selective measurement of $\mathcal{O}_B$, the reduced state of the quanton is described by the statistical ensemble $\rho_A = \sum_k p_k \rho^{(k)}_A$. This ensemble is sorted into sub-ensembles $\rho^{(k)}_A$ depending on the measurement outcome of the detector observable. The sorting into the sub-ensembles depends on the choice of the observable $\mathcal{O}_B$, as described in Ref. \cite{Bethold}. Let us define the distinguishability of the sub-ensemble $\rho^{(k)}_A$ as $D^k_{vn}(\rho^{(k)}_A) := \log_2 d_A - S_{vn}(\rho^{(k)}_{A diag})$. The complete which-way knowledge is defined by the statistical average over all possible outcomes $o_k$, i.e., $D_{vn}(\rho_A) := \sum_k p_k D^k_{vn}(\rho^{(k)}_A) = \log_2 d_A - \sum_k p_k S_{vn}(\rho^{(k)}_{A diag})$. Similarly, it is possible to define the quantum coherence of the sub-ensemble $\rho^{(k)}_A$ as $\mathcal{C}^k_{re}(\rho^{(k)}_A) := S_{vn}(\rho^{(k)}_{A diag}) - S_{vn}(\rho^{(k)}_{A})$, such that the averaged quantum coherence for the sorting based on the measurement of $\mathcal{O}_B$ is given by $\mathcal{C}_{re}(\rho_A)  = \sum_k p_k
\mathcal{C}^k_{re}(\rho^{(k)}_A)$. Therefore, we have the following complementarity relation
\begin{align}
    D_{vn}(\rho_A) + \mathcal{C}_{re}(\rho_A)  \le \log_2 d_A, \label{eq:cr3}
\end{align}
which relates the distinguishability (or which-alternative knowledge) and the average quantum coherence of sorting $\rho_A$ into sub-ensembles $\rho^{(k)}_A$ depending on the measurement outcome of the detector observable, and holds for all $\mathcal{O}_{B}$. This complementarity relation is equivalent to Eq. (39) of Ref. \cite{Bethold}, while Eq. (\ref{eq:cr6}) is equivalent to Eq. (36) of Ref. \cite{Bethold}. 

It is noteworthy that $D_{vn}(\rho_A)$ can be rewritten as
\begin{align}
    D_{vn}(\rho_A)  =& \ln d_A - S_{vn}(\rho_{A diag}) \nonumber \\ & + S_{vn}(\rho_{A diag}) - \sum_k p_k S_{vn}(\rho^{(k)}_{A diag}) \nonumber \\
     =& P_{vn}(\rho_A) + E(\rho_{A}),
\end{align}
where $E(\rho_A) :=  S_{vn}(\rho_{A diag}) - \sum_k p_k S_{vn}(\rho^{(k)}_{A diag})$. It is easy to see that this quantity vanishes if and only if all $\ket{d_j}_B$ are identical, i.e., $E(\rho_A)$ is null if and only if the bipartite quantum state $\ket{\Psi}_{A,B}$ is separable. Besides, $E(\rho_A) \ge 0$ since $S_{vn}(\rho_{A diag})$ is a concave function. On the other hand, if  $\braket{d_k}{d_j} = \delta_{jk}\ \forall j,k$ and $\abs{a_i} = 1/\sqrt{d_A}$, which means the state is maximally entangled, then $E(\rho_A) = S_{vn}(\rho_{A diag}) = S_{vn}(\rho_{A}) = \log_2 d_A$. Therefore, $E(\rho_A)$ can be considered as a indicative of entanglement in this situation. In addition, Englert and Bergou conjectured, in Ref. \cite{Bethold}, that $\mathcal{E}(\rho_{A}) := (D_{vn}(\rho_A) + \mathcal{C}_{re}(\rho_A)) - (P_{vn}(\rho_A) + C_{re}(\rho_A))$ is indicative of entanglement, and that perhaps one could introduce a useful quantitative measure for the entanglement by studying the properties of this difference. Here $\mathcal{E}(\rho_{A}) =  S_{vn}(\rho_{A}) - \sum_k p_k S_{vn}(\rho^{(k)}_{A}),$ which has the same properties as $E(\rho_A)$.

Now, we turn to the main result of this letter. By noting that the largest value $\mathcal{D}_{vn}$ of the distinguishability $D_{vn}$ is by definition $\mathcal{D}_{vn}:=\max_{\Pi^k_B}D_{vn} = \log_2 d_A$, since the maximum is reached when the states $\{\ket{d_j}_B\}_{j = 1}^{d_A}$ are orthogonal and therefore $S_{vn}(\rho^{(k)}_{A diag}) = 0$. In this case, $\rho_{A diag} = \rho_A$ and thus $E(\rho_A) = \mathcal{E}(\rho_{A}) = S_{vn}(\rho_A)$, which is a well known entanglement monotone for bipartite pure states \cite{Vedral}. Before stating the main theorem, let us present a quick review on entanglement monotones, which will be useful to prove the main theorem. Following Ref. \cite{Zhu}, let us denote $\mathcal{D}(\mathcal{H)}$ the set of density matrices on $\mathcal{H} \simeq \mathbb{C}^d$ and $U(d)$ the group of unitary operators on $\mathcal{H}$. In addition, let $\mathcal{F}_U$ be the set of local unitarily invariant functions on $\mathcal{D}(\mathcal{H)}$ such that each function $f \in \mathcal{F}_U$ is defined on the space of density matrices for each positive integer $d = \dim \mathcal{H}$. For any given $d$, the function satisfies
\begin{align}
    f(U\rho U^{\dagger}) = f(\rho), \ \ \forall \rho \in \mathcal{D}(\mathcal{H)}, \ U \in U(d). \label{eq:unit}
\end{align}
Therefore $f(\rho)$ can be taken as a function of the eigenvalues of $\rho$. By restricting ourselves to the set $\mathcal{F}_{Uc} \subset \mathcal{F}_{U}$ of local unitarily invariant and real concave functions on $\mathcal{D}(\mathcal{H)}$, then each function $f \in \mathcal{F}_{Uc}$ satisfies Eq. (\ref{eq:unit}) and
\begin{align}
    f(\lambda \rho + & (1 - \lambda) \sigma) \ge \lambda f(\rho) + (1 - \lambda) f(\sigma) \nonumber \\
    & \forall \rho, \sigma \in \mathcal{D}(\mathcal{H)}, \lambda \in[0,1],
\end{align}
for any given $d$. Now, let $\mathcal{H} \simeq \mathcal{H}_A \otimes \mathcal{H}_B$ be a bipartite Hilbert space of a bipartite quantum system $A$ and $B$ with dimension $d_A = d_B = d$. The fact that the dimensions of the subsystems are the same is not essential here. Any function $f \in \mathcal{F}_{Uc} $ can be used to construct an entanglement monotone $E_f$ on $\mathcal{D}(\mathcal{H)}$ as follows. For a pure state $\ket{\Psi} \in \mathcal{H}$,
\begin{align}
    E_f(\Psi) := f(\Tr_B(\ketbra{\Psi})) = f(\rho_A). \label{eq:pure}
\end{align}
Then, it is possible to extend the monotone for mixed states $\rho \in \mathcal{D}(\mathcal{H)}$ by the convex roof construction:
\begin{align}
    E_f(\rho) := \min_{\{p_j, \ket{\Psi_j}\}} \sum_j p_j E_f(\Psi_j), \label{eq:mixed}
\end{align}
where the minimization runs over all pure state ensembles $\{p_j, \ket{\Psi_j}\}$ for which $\rho = \sum_j p_j \ketbra{\Psi_j}$. Conversely, the restriction to pure states of any entanglement monotone is identical to $E_f$ for a given $f \in \mathcal{F}_{Uc}$. These facts were stated as a theorem in Ref. \cite{Vidal}.

Besides that, let $\Delta_d$ be the probability simplex of probability vectors with $d$ components. A function on $\Delta_d$ is symmetric if it is invariant under permutations of the components of probability vectors. Let $\mathcal{F}_s$ be the set of symmetric functions on the probability simplex such that each function $f \in \mathcal{F}_s$ is defined for any given positive integer $d$. The authors in Ref. \cite{Zhu} showed that any symmetric function  $f \in \mathcal{F}_s$  can be lifted to an unitarily invariant function on $\mathcal{D}(\mathcal{H)}$: $\hat{f}(\rho) := f(\text{eig}(\rho)), \ \forall \rho \in \mathcal{D}(\mathcal{H)},$ where $\text{eig}(\rho)$ are the eigenvalues of $\rho$. Conversely, any unitarily invariant function $f$ on the space of density matrices defines a symmetric function on the probability simplex when restricted to diagonal density matrices:$\check{f}(p) := f(\rho_{diag}), \ \forall p \in \Delta_d,$
where $p\in \Delta_d$ represents a probability distribution in $\Delta_d$, which in this case is given by $\rho_{diag}$.  Therefore, for any concave function $f \in  \mathcal{F}_s$, $E_f$ defined by Eqs. (\ref{eq:pure}) and (\ref{eq:mixed}) is an entanglement monotone. Conversely, the restriction to pure states of any entanglement monotone is identical to $E_f$ for a given concave function $f \in  \mathcal{F}_s$. This claim was proved in Ref. \cite{Zhu}.

Therefore, we can see why $E_{vn}(\rho_A)$, when restricted to the largest value of $D_{vn}(\rho_A)$, is an entanglement monotone. First, $E_{vn}(\rho_A) = \mathcal{D}_{vn} - P_{vn}(\rho_{A}) = f(\Tr_B(\ket{\Psi}_{A,B}\ket{\Psi}))$. Second, $P_{vn}(\rho_A)$ is a convex functions of $\rho_A$, thus  $E_{vn}(\rho_A)$ is concave \cite{Roberts}. In addition, $P_{vn}(\rho_A)$ is invariant under the permutation of the elements $\rho_{jj}$ by the criteria $C2$ of the Appendix \ref{sec:appe}. Therefore, we can use any largest value $\mathcal{D}$ of a distinguishability measure used in complementarity relations of the type (\ref{eq:cr3}) together with a corresponding predictability measure used in complementarity relations of the type (\ref{eq:cr6}) to obtain entanglement monotones $E_f$ defined by Eqs. (\ref{eq:pure}) and (\ref{eq:mixed}), provided that the measures of predictability satisfy the criteria established in Refs. \cite{Durr, Englert}, and stated in Appendix \ref{sec:appe}. Therefore, we have the following theorem.

\begin{teo}
Let $\mathcal{D}$ be the largest value of any distinguishability measure used in complementarity relations of the type (\ref{eq:cr3}), together with a corresponding predictability measure $P$ used in complementarity relations of the type (\ref{eq:cr6}), which saturates only for pure quantum states. Therefore, the quantity
\begin{equation}
    E_f := \mathcal{D} - P(\rho_A) \label{eq:entmon_}
\end{equation}
is an entanglement monotone as defined by Eqs. (\ref{eq:pure}) and (\ref{eq:mixed}), provided that the measures of predictability satisfy the criteria established in the literature and stated in Appendix \ref{sec:appe}.
\end{teo}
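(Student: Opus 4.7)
The plan is to reduce the claim to a direct application of the Vidal--Zhu theorem already recalled in the text \cite{Vidal,Zhu}: any symmetric concave function on the probability simplex lifts to an entanglement monotone through Eqs.~(\ref{eq:pure}) and (\ref{eq:mixed}). The task therefore splits naturally into three steps: first, exhibit $E_f$ evaluated on a pure bipartite state as a function of the eigenvalues of $\rho_A$; second, verify that this function is permutation-symmetric on the simplex; third, verify that it is concave.

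For the first step, I would observe that attaining $\mathcal{D}$ in a complementarity relation of type (\ref{eq:cr3}) forces the detector states $\{\ket{d_j}_B\}$ in (\ref{eq:psiab}) to be mutually orthogonal, which in turn forces $\rho_A$ to be diagonal in the path basis. Every pure bipartite state admits such a representation through the Schmidt decomposition, with the coefficients $|a_j|^2$ identified as the eigenvalues of $\rho_A$. Since $\mathcal{D}$ is a universal constant (the saturated upper bound of the complementarity relation, e.g.\ $\log_2 d_A$ in the entropic instance), $E_f(\Psi) = \mathcal{D} - P(\rho_A)$ depends on $\ket{\Psi}_{A,B}$ only through the spectrum of $\rho_A$, and hence defines a function $\check{f}$ on the probability simplex $\Delta_{d_A}$.

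The second and third steps are inherited directly from the predictability axioms collected in Appendix~\ref{sec:appe}. Criterion C2, the permutation invariance demanded of any bona fide predictability, delivers the symmetry of $\check{f}$. The convexity criterion imposed on $P$ gives that $-P$, and hence $\check{f}$, is concave on $\Delta_{d_A}$ \cite{Roberts}. With both symmetry and concavity secured, the Vidal--Zhu theorem guarantees that the pure-state functional $E_f(\Psi) = \check{f}(\mathrm{eig}(\rho_A))$, extended to mixed $\rho$ by the convex-roof prescription (\ref{eq:mixed}), is an entanglement monotone on the full bipartite state space.

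The step I expect to be the main obstacle is the first: making fully explicit that $\mathcal{D}$ is basis-independent rather than itself a spectral function of $\rho_A$, so that the subtraction leaves a quantity depending on $\rho_A$ only through its spectrum. This is immediate in the entropic example worked out in the text, where $\mathcal{D}_{vn}=\log_2 d_A$, but in the general setting it rests on the hypothesis that any complementarity of type (\ref{eq:cr3}) is saturated by an orthogonal detector basis---a property the Schmidt decomposition always affords for pure bipartite states. Once this identification is secured, the remainder of the argument is a routine invocation of the criteria on $P$ together with the Vidal--Zhu lifting.
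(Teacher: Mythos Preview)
Your proposal is correct and follows essentially the same route as the paper's proof: use the Schmidt decomposition to identify the diagonal of $\rho_A$ with its spectrum, invoke criterion C2 for permutation symmetry and criterion C6 for convexity of $P$ (hence concavity of $E_f$), and conclude via the Vidal--Zhu lifting recalled in the text. If anything, your treatment of the first step---arguing explicitly that $\mathcal{D}$ is a dimension-dependent constant and that the orthogonal-detector scenario coincides with the Schmidt form---is more careful than the paper's, which simply asserts that $\mathcal{D}$ is constant and passes directly to the Schmidt decomposition.
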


\begin{proof}
Since the measures $P(\rho_A)$ satisfy the criteria established in Refs. \cite{Durr, Englert}, then $P(\rho_A)$ is a convex function, which implies that $E_f := \mathcal{D} - P(\rho_A)$ is a concave function, since the largest value $\mathcal{D}$ of a distinguishability measure is just a constant. Now, let $\ket{\Psi}_{A,B} \in \mathcal{H}_A \otimes \mathcal{H}_B$ be a purification of $\rho_A$, i.e., $\rho_A = \Tr_B \ket{\Psi}_{A,B}\bra{\Psi}$. Using the Schmidt decomposition $\ket{\Psi} = \sum_k \sqrt{\lambda_k}\ket{\phi_k}_A \otimes \ket{\psi_k}_B$, we can write $\rho_A =  \sum_k \lambda_k \ketbra{\phi_k}$, which implies that $P(\rho_A) \ge 0$. But $P(\rho_A)$ must be invariant under permutations of the states' indexes, which implies that $P(\rho_A)$ is invariant under the permutations of the components of the probability vectors $\vec{\lambda}=(\lambda_{0},\cdots,\lambda_{d-1})$, and therefore $E_f = \mathcal{D} - P(\rho_A) $ is invariant under the permutations of the components of $\vec{\lambda}$.
\end{proof}

An equivalent but normalized definition of Eq. (\ref{eq:entmon_}) is given by $E_f:= 1 - \frac{1}{\mathcal{D}}P(\rho_A)$. The largest value $\mathcal{D}$ is the constant that saturates complementarity relations of type of those in Eqs. (\ref{eq:cr6}) and (\ref{eq:cr3}). Besides, the theorem above remains true for the conjecture $\mathcal{E}_f := (\mathcal{D} + \mathcal{C}(\rho_A)) - (P(\rho_A) + C(\rho_A))$, since the largest value $\mathcal{D}$ of $D$ is obtained when the states of the path detector are orthogonal, which implies that $\mathcal{C}(\rho_A) = C(\rho_A) = 0$. In addition, from Eq. (\ref{eq:entmon_}), it is straightforward to see that all states that maximize $E_f$ has the same form, i.e., these states are those in which the reduced states satisfy $P(\rho_A) + C(\rho_A) = 0$, where $C(\rho_A)$ is the corresponding quantum coherence measure. From the criteria in Appendix \ref{sec:appe}, it is easy to see that these reduced states are the ones that are maximally mixed. 

As an example, let us consider the measure of entanglement obtained by Qureshi through the difference between a distinguishability measure $D_{\mathcal{Q}}$, given by Eq. (13) in Ref. \cite{Tabish}, and the corresponding predictability measure $P_{\mathcal{Q}}$, given by Eq. (11) in Ref. \cite{Tabish}, i.e.,
\begin{align}
    \mathcal{E}_{\mathcal{Q}} = \frac{1}{d_{A}-1}\sum_{j \neq k}\Big( \sqrt{\rho_{jj}\rho_{kk}} - \sqrt{\rho_{jj}\rho_{kk}}\abs{\braket{d_j}{d_k}}\Big), 
\end{align}
where $\rho_{jk} = a_j a^*_k$ of Eq. (\ref{eq:psiab}) and $\ket{d_j}$ are the states of the detector. As well, $D_{\mathcal{Q}}$ is maximized when the states of the detector are orthogonal to each other. Therefore, in this case $\mathcal{E}_{\mathcal{Q}} = \frac{1}{d_{A}-1} \sum_{j \neq k} \sqrt{\lambda_j \lambda_k}$, where $\lambda_j$ are the coefficients of the Schmidt decomposition. It is worth pointing out that the author suggested that a more rigorous analysis is needed to consider $\mathcal{E}_{\mathcal{Q}}$ as a proper entanglement measure. Thus, one can see that, when restricted to the largest value of $D_{\mathcal{Q}}$, $\mathcal{E}_{\mathcal{Q}}$ is an entanglement monotone. In fact, $\mathcal{E}_{\mathcal{Q}} = (d_{A}-1) \mathcal{R}$, where $\mathcal{R}$ is the robustness of entanglement for global pure states \cite{Tarrach}.

Summing up, in this letter we pushed foward the conjecture of Englert and Bergou.
We started defining a new entropic distinguishability measure and suggesting a corresponding entanglement measure as the difference between this entropic distinguishability and an entropic predictability measure already defined in the literature \cite{Maziero}. More importantly, we proved that it is possible to define an entanglement monotone from the largest value of $D$ and the corresponding predictability, provided that the predictability satisfies the criteria already established in Refs. \cite{Durr, Englert}. Thus, this result formally connects an entanglement monotone with the largest value of $D$ and the corresponding predictability, without appealing to specific measures. In addition, it opens the possibility for establishing new entanglement measures whenever a new distinguishability measure, and a corresponding predictability measure that satisfies the criteria of Refs. \cite{Durr, Englert}, is obtained.
  
\begin{acknowledgments}
This work was supported by the Coordena\c{c}\~ao de Aperfei\c{c}oamento de Pessoal de N\'ivel Superior (CAPES), process 88882.427924/2019-01, and by the Instituto Nacional de Ci\^encia e Tecnologia de Informa\c{c}\~ao Qu\^antica (INCT-IQ), process 465469/2014-0.
\end{acknowledgments}

\appendix

\section{Criteria for Predictability/Visibility Measures}
\label{sec:appe}
D\"urr \cite{Durr} and Englert \textit{et al.} \cite{Englert} established criteria that can be taken as a standard for checking for the reliability of newly defined predictability measures $P(\rho)$ and interference pattern visibility quantifiers $V(\rho)$. These required properties can be stated as follows:
\begin{itemize}
\item[C1] $P$ must be a continuous function of the diagonal elements of the density matrix. And $V$ must be a continuous function of the elements of the density matrix.
\item[C2] $P$ and $V$ must be invariant under permutations of the base states indexes.
\item[C3] If $\rho_{jj}=1$ for some $j$, then $P$ must reach its maximum value, while $V$ must reach its minimum possible value.
\item[C4] If $\{\rho_{jj}=1/d\}_{j=0}^{d-1}$, then $P$ must reach its minimum value. In addition, if $\rho$ is pure, then $V$ must reach its maximum value.
\item[C5] If $\rho_{jj}>\rho_{kk}$ for some $(j,k)$, the value of $P$ cannot be increased by setting $\rho_{jj}\rightarrow\rho_{jj}-\epsilon$ and $\rho_{kk}\rightarrow\rho_{kk}+\epsilon$, for $\epsilon\in\mathbb{R}_{+}$ and $\epsilon\ll1$. And $V$ cannot be increased when decreasing $|\rho_{jk}|$ by an infinitesimal amount, for $j\ne k$.
\item[C6] $P$ and $V$ must be convex functions, i.e., $f(\sum_i \lambda_i \rho_i)\le \sum_i \lambda_i f(\rho_i)$, with $\sum_i \lambda_i = 1$, $\lambda_i \in [0,1]$, $f = P, V$ and for $\rho_i$ being valid density matrices.
\end{itemize}


\end{document}